\newcommand{\bbF}{\mathbb{F}}
\newcommand{\prob}[1]{\mathbb{P}\left[#1 \right]}
\newcommand{\cruc}{\mathcal{T}^{\geq 3}}
\newcommand{\mapnbit}{\colon \bbF_2^n \rightarrow \bbF_2}
\newcommand{\inceil}[1]{\left\lceil #1 \right\rceil}
\title{Constructive Relationships Between Algebraic Thickness and Normality
\thanks{Partially supported by the Danish
Council for Independent Research, Natural Sciences.
}}
\author{%
Joan Boyar\inst{1}  \and 
Magnus Gausdal Find\inst{2}\thanks{Most of this work was done while at the University of Southern Denmark}
}%
\institute{
Department of Mathematics of Computer Science, University of Southern Denmark\\
\email{joan@imada.sdu.dk}\\ 
\and
Information Technology Laboratory, National Institute of Standards and Technology,
USA\\
\email{magnus.find@nist.gov}
}
\begin{document}

\maketitle

\begin{abstract}
We study the relationship between two measures of Boolean functions;
\emph{algebraic thickness} and \emph{normality}. For a function $f$, the algebraic thickness
is a variant of the \emph{sparsity}, the number of nonzero coefficients in the unique
$\bbF_2$ polynomial representing $f$, and the normality is the largest dimension of
an affine subspace on which $f$ is constant.
We show that for $0 < \epsilon<2$, any function with algebraic thickness $n^{3-\epsilon}$ is constant on some affine subspace of dimension $\Omega\left(n^{\frac{\epsilon}{2}}\right)$.
Furthermore, we give an algorithm for finding such a subspace.
We show that this is at most a factor of $\Theta(\sqrt{n})$ from the best guaranteed, and when restricted to the technique used, is at most a factor of $\Theta(\sqrt{\log n})$ from the best guaranteed. We also show that a concrete function, majority, has algebraic thickness $\Omega\left(2^{n^{1/6}}\right)$.
 \end{abstract}

\section{Introduction and Known Results}
Boolean functions play an important role in many areas of computer science. In cryptology,
Boolean functions are sometimes classified according to some measure of complexity (also
called cryptographic complexity
\cite{DBLP:conf/dagstuhl/Carlet06}, nonlinearity criteria
\cite{DBLP:conf/eurocrypt/MeierS89} or nonlinearity measures
\cite{DBLP:conf/ciac/BoyarFP13}).
Examples of such measures are \emph{nonlinearity}, \emph{algebraic degree},
\emph{normality}, \emph{algebraic thickness} and \emph{multiplicative complexity},
and there are a number of results showing that  functions that are simple
according to a certain measure are vulnerable to a certain attack
(see \cite{carletbook} for a good survey).

A significant amount of work in this area presents explicit functions that achieve high (or low) values according to some measure. For the \emph{nonlinearity} measure
this was settled by showing the existence of bent functions
\cite{rothausbent}, for \emph{algebraic degree} the problem is trivial, for \emph{multiplicative complexity} this is a well studied problem in circuit complexity \cite{DBLP:journals/tcs/BoyarPP00}, for \emph{normality} this is exactly the problem of finding good \emph{affine dispersers} \cite{DBLP:conf/focs/Shaltiel11}. The first result in this paper is that the majority function has exponential algebraic thickness.

Another line of work has been to establish relationships between
these measures, e.g. considering questions of the form ``if a function $f$ is simple
(or complex) according to one measure, what does that
say about $f$ according to some other measure'', see e.g. \cite{carlet2001complexity,carletbook,DBLP:conf/ciac/BoyarFP13} and the references therein.
In this paper we focus on the relationship between \emph{algebraic thickness}
and \emph{normality}. Intuitively, these measures capture, each in their own way,
how ``far'' functions are from being linear
\cite{DBLP:journals/tit/Carlet04,DBLP:conf/dagstuhl/Carlet06}.
In fact, these two measures have been studied together previously (see e.g. \cite{carlet2003algebraic,DBLP:journals/tit/Carlet04}).
The relationship between these measures was considered in the work of Cohen and Tal
in \cite{DBLP:journals/corr/CohenT14}, where
they show that functions with a certain algebraic thickness have a certain
normality. 
For relatively small values of algebraic thickness, we tighten their bounds and present an algorithm to witness this normality.
The question of giving a constructive proof of normality is not just a theoretical one.
Recently a generic attack on stream ciphers with high normality was
successfully mounted in the work \cite{DBLP:journals/pmh/MihaljevicGPI12}.
If it is possible to constructively compute a witness of normality given a function with
low algebraic thickness, this implies that any function with low algebraic thickness
is likely to be vulnerable to the attack in \cite{DBLP:journals/pmh/MihaljevicGPI12}, as
well as any other attack based on normality. This work suggests that this is indeed possible
for functions with small algebraic thickness.

\section{Preliminaries and Known Results}
Let $\bbF_2$ be the field of order $2$, $\bbF_2^n$
the $n$-dimensional vector space over $\bbF_2$, and
$[n]=\{1,\ldots ,n \}$.
A mapping from $\bbF_2^n$ to $\bbF_2$ is called a \emph{Boolean function}. It is a well known fact that any Boolean function $f$ in the variables $x_1,\ldots ,x_n$ can be expressed uniquely as a multilinear polynomial over $\bbF_2$ called the \emph{algebraic normal form} (ANF) or the \emph{Zhegalkin polynomial}.
That is, there exist unique constants 
$c_{\emptyset},
\ldots , c_{\{1,\ldots ,n \}}$ over $\{ 0,1\}$, such that
\[
f(x_1,\ldots ,x_n) = \sum_{S\subseteq [n]}c_{S}\prod_{j\in S}x_j,
\]
where arithmetic is in $\bbF_2$. In the rest of this paper, most arithmetic will be
in $\bbF_2$, although we still need arithmetic in $\mathbb{R}$.
If nothing is mentioned it should be clear from the context what field is referred to.
The largest $|S|$ such that $c_S=1$ is
called the \emph{(algebraic) degree} of $f$, and functions with degree $2$ are called
\emph{quadratic} functions. We let $\log(\cdot)$ be the logarithm base two, $\ln(\cdot)$ the natural
logarithm, and $\exp(\cdot)$ the natural exponential function with base $e$.

\subsubsection{Algebraic Thickness}
For a Boolean function, $f$, let $\| f \| = \sum_{S\subseteq [n]}c_{S}$, with arithmetic in
$\mathbb{R}$. This measure is sometimes called the \emph{sparsity} of $f$ (e.g. 
\cite{DBLP:journals/corr/CohenT14}).
The \emph{algebraic thickness} \cite{carlet2001complexity,DBLP:journals/tit/Carlet04}
of $f$, denoted $\mathcal{T}(f)$ is defined as the smallest sparsity after any
affine bijection has been applied to the inputs of $f$. More precisely, letting
$\mathcal{A}_n$ denote the set of affine, bijective operators on $\bbF_2^n$, 
\begin{equation}
  \label{eq:algthickdef}
  \mathcal{T}(f) = \min_{A\in \mathcal{A}_n }
\|
f\circ A\|.
\end{equation}

Algebraic thickness was introduced and first studied by Carlet
in \cite{carlet2001complexity,carlet2003algebraic,DBLP:journals/tit/Carlet04}.
 Affine functions have algebraic thickness at most $1$, and Carlet showed that for any
constant $c>\sqrt{\ln 2}$, for sufficiently large $n$ there exist functions with algebraic
thickness
\[
 2^{n-1}-cn2^{\frac{n-1}{2}},
\]
and that a \emph{random} Boolean function will have such high algebraic thickness with high probability. Furthermore \emph{no} function has algebraic thickness larger
than $\frac{2}{3}2^{n}$.
Carlet observes that algebraic thickness was also implicitly mentioned in
\cite[Page 208]{DBLP:books/crc/MenezesOV96} and 
related to the so called ``higher order differential attack''
due to Knudsen \cite{knudsen1995truncated} and
Lai \cite{lai1994higher} in that they are dependent on the degree as well as the number
of terms in the ANF of the function used.

\subsubsection{Normality}
A $k$-dimensional \emph{flat} is an affine (sub)space of $\bbF_2^n$ with dimension $k$.
A function is $k$\emph{-normal} if there exists a $k$-dimensional flat
$E$ such that $f$ is constant on $E$ \cite{DBLP:journals/jc/Charpin04,carlet2001complexity}.
For simplicity define the \emph{normality} of a function $f$,
which we denote $\mathcal{N}(f)$, as the \emph{largest} $k$ such that $f$ is $k$-normal. 
We recall that affine functions have normality at least $n-1$ (which is the largest
possible for non-constant functions), while for any $c>1$, a random Boolean function
has normality less than $c\log n$
with high probability.

Functions with normality smaller than $k$ are often called 
\emph{affine dispersers} of dimension $k$, and a great deal of work has been put into
explicit constructions of functions with
low normality. Currently the asymptotically best known deterministic function, due to Shaltiel,
has normality less than $2^{\log^{0.9} n}$ \cite{DBLP:conf/focs/Shaltiel11}.

Notice the asymmetry in the definitions: linear functions have very low algebraic thickness
($0$ or $1$) but very high normality ($n$ or $n-1$), whereas random functions, with
high probability, have very
high algebraic thickness (at least $2^{n-1}-0.92\cdot n\cdot 2^{\frac{n-1}{2}}$)
but low normality (less than $1.01\log n$) \cite{carlet2003algebraic}.

\subsubsection{Remark on Computational Efficiency}
In this paper, we say that something is efficiently computable if it is computable
in time polynomially bounded in the size of the input.
Algorithms in this paper will have a Boolean function with a certain algebraic thickness
as input. We assume that the function is represented by the ANF of the
function witnessing this small algebraic thickness along with the bijection. That is, if a function $f$ with algebraic thickness $\mathcal{T}(f)=T$ is the input to the algorithm, we assume that it is represented by a function $g$ and an affine bijection $A$ such that
$g=f\circ A$ and $\| g \|=T$. In this setting, representing a
function $f$ uses $poly(\mathcal{T}(f)+n^2)$ bits.

\subsubsection{Quadratic Functions}
The normality and algebraic thickness of quadratic functions are well understood
due to the following theorem due to Dickson \cite{dicksonsthm} (see
also \cite{carletbook} for a proof).

\begin{theorem}[Dickson]
\label{thm:dickson}
 Let $f\colon \bbF_2^{n}\mapsto \bbF_2$ be quadratic.
Then there exist an invertible $n\times n$ matrix $A$, a vector
$\mathbf{b}\in \bbF_2^{n}$, $t\leq \frac{n}{2}$, 
and $c\in\bbF_2$ such that for $\mathbf{y}=A\mathbf{x}+\mathbf{b}$ one of the following two
equations holds:
\[
f(x)=y_1y_2+y_3y_4+\ldots y_{t-1}y_{t}+c, \textrm{ or } 
f(x)=y_1y_2+y_3y_4+\ldots y_{t-1}y_{t}+y_{t+1}.
\]
Furthermore $A$, $\mathbf{b}$ and $c$ can be found efficiently.
\end{theorem}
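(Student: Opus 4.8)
The plan is to reduce Dickson's classification to the normal form for alternating bilinear forms over $\bbF_2$. To a quadratic $f$ associate the form $B_f(\mathbf{x},\mathbf{y}) = f(\mathbf{x}+\mathbf{y}) + f(\mathbf{x}) + f(\mathbf{y}) + f(\mathbf{0})$. Expanding shows the constant and linear parts of $f$ cancel, so $B_f$ depends only on the homogeneous degree-$2$ part $f_2$ of $f$; moreover $B_f$ is bilinear, symmetric, and satisfies $B_f(\mathbf{x},\mathbf{x}) = 0$, i.e.\ it is alternating. Conversely, for $i \ne j$ the coefficient of $x_i x_j$ in $f_2$ equals $B_f(\mathbf{e}_i,\mathbf{e}_j)$, so $f_2$ is recovered from $B_f$; hence $f$ is determined up to an additive affine function by $B_f$ alone.

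Next I would invoke the structure theorem for alternating forms over $\bbF_2$: there is an invertible matrix $M$ such that, setting $\mathbf{y} = M^{-1}\mathbf{x}$, the form becomes $B_f = y_1 y_2 + y_3 y_4 + \cdots + y_{2t-1}y_{2t}$ for some $t$ (so $\mathrm{rank}(B_f) = 2t \le n$), with $y_{2t+1},\ldots,y_n$ spanning the radical of $B_f$. Since, as observed, $f_2$ is forced by $B_f$, in these coordinates $f$ takes the shape $\sum_{i=1}^{t} y_{2i-1}y_{2i} + L(\mathbf{y})$ for an affine $L$.

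I would then eliminate $L$ in two stages. First, any linear term $d\,y_j$ with $j \le 2t$ is cleared by translating the symplectic partner of $y_j$ by $d$: this modifies the product $y_{2i-1}y_{2i}$ by exactly $d\,y_j$ together with a constant, and the constant is folded into $c$. What remains is $\sum_{i=1}^{t} y_{2i-1}y_{2i} + \ell(\mathbf{y}) + c$, where $\ell$ involves only the radical variables. If $\ell \equiv 0$ we are in the first case of the theorem. If $\ell \not\equiv 0$, an invertible linear change among the radical variables turns $\ell$ into a single variable $y_{2t+1}$, and finally the translation $y_{2t+1} \mapsto y_{2t+1} + c$ absorbs the constant, giving the second case.

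The step carrying the real weight is the structure theorem for alternating bilinear forms over $\bbF_2$ and, for the ``furthermore'' clause, its algorithmic form: repeatedly locate a pair $\mathbf{u},\mathbf{v}$ with $B_f(\mathbf{u},\mathbf{v}) = 1$, split off the nondegenerate plane they span via $V = \langle \mathbf{u},\mathbf{v}\rangle \oplus \langle \mathbf{u},\mathbf{v}\rangle^{\perp}$, and recurse on $\langle \mathbf{u},\mathbf{v}\rangle^{\perp}$, stopping when the restricted form vanishes identically (that remaining subspace being the radical). This ``symplectic Gram--Schmidt'' is pure $\bbF_2$-linear algebra and runs in polynomial time, and the subsequent translations and the change of variables on the radical are ordinary Gaussian elimination, so $M$ (hence $A$), $\mathbf{b}$, and $c$ are all produced efficiently.
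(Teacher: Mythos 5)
Your proof is correct: the polarization $B_f(\mathbf{x},\mathbf{y})=f(\mathbf{x}+\mathbf{y})+f(\mathbf{x})+f(\mathbf{y})+f(\mathbf{0})$ is indeed an alternating bilinear form determining the degree-$2$ part, the symplectic Gram--Schmidt normal form plus translations to clear linear terms yields exactly the two cases (your $2t$ is the rank, matching the theorem's dimension bound), and every step is plain $\bbF_2$-linear algebra, so the ``furthermore'' clause holds. The paper does not prove this theorem itself but cites Dickson and the survey \cite{carletbook}, where the standard proof is precisely this reduction to the alternating-form normal form, so your approach coincides with the cited one.
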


That is, any quadratic function is affine equivalent to some inner product function.
We highlight a simple but useful consequence of Theorem~\ref{thm:dickson}.
Simply by setting one variable in each of the degree two terms to zero, one gets:
\begin{proposition}
\label{prop:quadraticnormality}
Let $f\mapnbit $ be quadratic. Then $\mathcal{N}(f)\geq 
\left\lfloor \frac{n}{2}\right\rfloor$. Furthermore a
flat witnessing the normality of $f$ can be found efficiently.
\end{proposition}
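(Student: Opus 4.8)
The plan is to reduce to Dickson's canonical form and then write down the witnessing flat explicitly, by zeroing out one variable from each quadratic monomial. First I would apply Theorem~\ref{thm:dickson} to $f$, obtaining an invertible matrix $A$, a vector $\mathbf{b}\in\bbF_2^n$, a constant $c\in\bbF_2$, and the index $t$ such that, setting $\mathbf{y}=A\mathbf{x}+\mathbf{b}$, either $f=y_1y_2+y_3y_4+\cdots+y_{t-1}y_t+c$ or $f=y_1y_2+y_3y_4+\cdots+y_{t-1}y_t+y_{t+1}$. Let $I\subseteq[n]$ collect the first coordinate of each quadratic monomial, say $I=\{1,3,\ldots,t-1\}$, together with the index $t+1$ in the second case. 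Since the number of quadratic monomials is at most $\lfloor n/2\rfloor$, and whenever the extra linear variable $y_{t+1}$ is present the number of monomials is correspondingly smaller, we get $|I|\le\lceil n/2\rceil$.

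Next I would exhibit the flat. Consider the linear subspace $V=\{\mathbf{y}\in\bbF_2^n : y_i=0 \text{ for all } i\in I\}$, which has dimension $n-|I|\ge n-\lceil n/2\rceil=\lfloor n/2\rfloor$. On $V$ every product $y_{2j-1}y_{2j}$ vanishes because its first factor does, and in the second case $y_{t+1}$ vanishes as well; hence the composition of $f$ with the affine bijection $\mathbf{x}\mapsto A\mathbf{x}+\mathbf{b}$ is identically $c$ (first case) or identically $0$ (second case) on $V$. Pulling $V$ back through that bijection, the set $E=\{A^{-1}(\mathbf{y}-\mathbf{b}) : \mathbf{y}\in V\}$ is an affine subspace of $\bbF_2^n$ of dimension $\dim V\ge\lfloor n/2\rfloor$ on which $f$ is constant, so $\mathcal{N}(f)\ge\lfloor n/2\rfloor$.

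For the efficiency claim, Theorem~\ref{thm:dickson} already supplies $A$, $\mathbf{b}$, $c$ (and hence $t$) efficiently; the index set $I$ is read off in linear time; and producing a representation of $E$ (an offset together with a basis of its direction space) amounts to applying $A^{-1}$ to a basis of $V$, i.e.\ $\bbF_2$-linear algebra on $n\times n$ matrices, which runs in polynomial time.

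I do not expect a genuine obstacle here: essentially all the content lives in Dickson's theorem, which we are free to assume. The only points requiring a little care are the counting that pins the dimension down to exactly $\lfloor n/2\rfloor$ — in particular making sure the extra linear term, when it occurs, does not cost an additional dimension — and the (immediate) observation that an affine bijection carries a flat to a flat of the same dimension, so "constant on a flat in the $\mathbf{y}$-coordinates" transfers to "constant on a flat in the original coordinates".
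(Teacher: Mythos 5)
Your proposal is correct and follows essentially the same route as the paper: apply Dickson's theorem and then zero out one variable from each degree-two term, pulling the resulting flat back through the affine bijection. You merely spell out details the paper leaves implicit, in particular that in the second canonical form one must also kill the extra linear variable $y_{t+1}$ and that the dimension count still yields $\left\lfloor \frac{n}{2}\right\rfloor$.
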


\subsubsection{Some Relationships}
It was shown in \cite{DBLP:journals/tit/Carlet04} that
normality and algebraic thickness are logically independent of (that is, not subsumed by)
each other.
Several other results relating
algebraic thickness and normality to other cryptographic measures
 are given in \cite{DBLP:journals/tit/Carlet04}.
We mention a few relations to other measures.

Clearly, functions with degree $d$ have algebraic thickness $O( n^d )$, so having
superpolynomial algebraic thickness requires superconstant degree.
The fact that there exist functions with low degree and low normality has been
established in \cite{carlet2001complexity} and \cite{DBLP:journals/corr/CohenT14}
independently. In the following, by a \emph{random degree three polynomial}, we mean a function where each term of degree three is included in the ANF independently with probability $\frac{1}{2}$. No other terms are included in the ANF.

\begin{theorem}[\cite{carlet2001complexity,DBLP:journals/corr/CohenT14}\footnote{The constant $6.12$ does not appear explicitly in these articles, however it can be derived using similar calculations as in the cited papers. This also follows from Theorem~\ref{thm:randomsmallthick} later in this paper. We remark that $6.12$ is not optimal.}]
\label{thm:degree3disperser}
 Let $f$ on $n$ variables be a random degree three polynomial. Then with high probability,
$f$ remains nonconstant on any subspace of dimension $6.12\sqrt{n}$.
\end{theorem}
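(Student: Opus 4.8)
The plan is a union bound over all flats of dimension $k:=\lceil 6.12\sqrt{n}\,\rceil$, based on the observation that the degree-three homogeneous part of the restriction of a random degree-three polynomial to a fixed $k$-dimensional flat is itself a uniformly distributed homogeneous degree-three polynomial in $k$ variables, and hence vanishes identically only with probability $2^{-\binom{k}{3}}$.

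First, fix a $k$-dimensional flat $E=v_0+V$ together with a parametrization $\varphi(y)=v_0+\sum_{i=1}^{k}y_i v_i$ of $E$, and write $f=\sum_{|S|=3}c_S\prod_{j\in S}x_j$ with the bits $c_S$ independent and uniform. Substituting $\varphi$ expresses $f\circ\varphi$ as a polynomial of degree at most three in $y_1,\dots,y_k$, each of whose coefficients is an $\bbF_2$-linear form in the vector $(c_S)$. Thus $f$ is constant on $E$ exactly when $(c_S)$ lies in the kernel of the $\bbF_2$-linear map $\Phi_E$ sending $(c_S)$ to the tuple of \emph{non-constant} coefficients of $f\circ\varphi$, so $\prob{f \text{ is constant on } E}=2^{-\mathrm{rank}(\Phi_E)}$.

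The main step is to show $\mathrm{rank}(\Phi_E)\ge\binom{k}{3}$, and for this it is enough to look only at the rows of $\Phi_E$ corresponding to the degree-three coefficients of $f\circ\varphi$. The degree-three homogeneous part of $f\circ\varphi$ is $\sum_{|S|=3}c_S\prod_{j\in S}L_j(y)$, where $L(y)=\sum_i y_i v_i$ is the injective linear parametrization of $V$ and $L_j$ is its $j$-th coordinate. Choose a set $R$ of $k$ coordinates on which $L$ restricts to a linear bijection of $\bbF_2^{k}$; then $z:=(L_j(y))_{j\in R}$ is an invertible linear change of variables, and for $3$-subsets $S\subseteq R$ the polynomials $\prod_{j\in S}L_j$ become, in the variables $z$, the $\binom{k}{3}$ squarefree degree-three monomials, which form a basis of the homogeneous degree-three polynomials. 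Hence $(c_S)\mapsto(\text{degree-three part of }f\circ\varphi)$ is surjective onto $\bbF_2^{\binom{k}{3}}$, so $\mathrm{rank}(\Phi_E)\ge\binom{k}{3}$ and $\prob{f \text{ is constant on } E}\le 2^{-\binom{k}{3}}$.

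There are at most $2^{\,kn+O(n)}=2^{\,kn(1+o(1))}$ flats of dimension $k$ in $\bbF_2^{n}$, so a union bound gives $\prob{\mathcal{N}(f)\ge k}\le 2^{\,kn(1+o(1))-\binom{k}{3}}$. With $k=\lceil 6.12\sqrt{n}\,\rceil$ we have $\binom{k}{3}=(1-o(1))k^{3}/6$ and $k^{2}\sim 37.4\,n>6n$, so $\binom{k}{3}$ exceeds $kn(1+o(1))$, the exponent tends to $-\infty$, and with high probability $f$ is nonconstant on every $k$-dimensional flat. The only step with real content is the surjectivity in the previous paragraph: it uses that over $\bbF_2$ a product of three independent linear forms can be turned into a product of three coordinate variables by an invertible linear substitution, and it forces one to isolate the top-degree part first, since restricting a homogeneous polynomial to an \emph{affine} flat (rather than a linear subspace) introduces lower-degree terms that would otherwise interfere. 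The flat count and union bound are routine, and, as the footnote to the statement indicates, the constant $6.12$ is far from optimal.
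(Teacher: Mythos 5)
Your proof is correct and takes essentially the same route as the paper, which obtains this statement as the $s=3$ case of Theorem~\ref{thm:randomsmallthick}: a union bound over the at most $2^{n(k+1)}$ flats of dimension $k$, with a per-flat bound of $2^{-\binom{k}{3}}$ obtained by choosing $k$ coordinates on which the flat's parametrization is invertible and noting that each coefficient $c_S$ with $S$ inside those coordinates feeds into exactly one degree-three coefficient of the restricted ANF (your rank/surjectivity formulation is just the linear-algebra phrasing of the paper's conditional-independence observation about the indicators $I_{abc}$ with $\{a,b,c\}\subseteq[k]$). Incidentally, your sharper use of $\binom{k}{3}\approx k^3/6$ (versus the paper's cruder $k^3/27$ bound) shows any constant above $\sqrt{6}$ suffices, consistent with the footnote that $6.12$ is not optimal.
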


In fact, as mentioned in \cite{DBLP:journals/corr/CohenT14} it is not
hard to generalize this to the fact that for any constant $d$, a random degree $d$
polynomial has normality $O\left(n^{1/(d-1)}\right)$.
Perhaps surprisingly, this is tight. More precisely the authors give an elegant proof showing that \emph{any} function with degree $d$ has $\mathcal{N}(f)\in \Omega\left(n^{1/{(d-1)}}\right)$. 
This result implies the following relation between algebraic thickness
and normality.

\begin{theorem}[Cohen and Tal \cite{DBLP:journals/corr/CohenT14}]
\label{avishayGilAlgThickNorm}
Let $c$ be an integer and let $f$ have $\mathcal{T}(f)\leq n^c$.
Then $\mathcal{N}(f)\in \Omega\left(n^{1/(4c)}\right)$.
\end{theorem}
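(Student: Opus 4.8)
The plan is to reduce the claim about a function $f$ with $\mathcal{T}(f) \le n^c$ to the degree-based normality bound that the authors attribute to Cohen and Tal, namely that any degree-$d$ function on $m$ variables has normality $\Omega(m^{1/(d-1)})$. Since algebraic thickness is invariant under affine bijections of the inputs and normality is likewise preserved by such maps (an affine bijection maps flats to flats and preserves constancy), we may assume without loss of generality that $f$ itself has $\|f\| \le n^c$; that is, the ANF of $f$ has at most $n^c$ monomials.

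The key observation is that a sparse polynomial depends on few variables \emph{and} has low degree after a cheap restriction. First I would bound the number of variables that actually appear in the ANF of $f$: each of the at most $n^c$ monomials involves some subset of variables, but this alone does not immediately reduce the variable count. Instead, the cleaner route is to control the degree directly. If the ANF has at most $n^c$ terms, then the degree $d$ of $f$ satisfies $\binom{m}{d} \le$ (number of degree-$d$ terms) $\le n^c$ where $m \le n$ is the number of relevant variables; more usefully, one restricts to a flat that kills all high-degree monomials. Concretely, I would argue that by fixing an appropriate affine subspace (setting some linear forms to constants), one can simultaneously (a) eliminate all monomials of degree exceeding roughly $2c$, and (b) retain a flat of dimension $\Theta(n)$ on which the restricted function has degree at most $2c$. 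The point is that a monomial of degree $k$ survives a random restriction to a codimension-$\ell$ flat with probability roughly $2^{-k}$ or becomes lower degree; choosing the flat to zero out enough variables, any monomial of degree $\ge d_0$ can be annihilated if there are at most $n^c$ of them and $d_0 \approx 2c$, since $n^c \cdot 2^{-d_0} < 1$ once $d_0 > c \log n$ — but that only gives $d_0$ logarithmic, not constant. So the honest reduction yields: on some flat of dimension $n - O(\log n)$, the function $f$ restricted has degree $O(\log n)$... which is too weak. The right move, and the one I expect the authors use, is simpler: directly apply the degree bound with $d = c$ is wrong since degree can be large; rather, observe $\|f\| \le n^c$ forces at most $n^c$ monomials, hence at most $n^c$ variables appear nontrivially in any single monomial's... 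Let me restate the intended argument: restrict to the subspace where we set to zero one variable from every monomial of degree greater than $2$, no — the cleanest: a sparse function of sparsity $s$ is constant on a flat of dimension $n - s$ trivially is false too.

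The main obstacle — and the crux of the proof — is getting the exponent $1/(4c)$ rather than something weaker. Here is the reduction I would commit to: from $\|f\| \le n^c$, pick a set $W$ of at most $n^c$ variables such that every monomial of $f$ contains at least one variable outside $[n] \setminus W$ — impossible in general, so instead one uses the \emph{greedy elimination}: repeatedly find the most frequent variable and set it to $0$; after killing a variable, at least a $1/m$ fraction... This is the standard argument showing that a polynomial with $s$ terms becomes low degree after few substitutions. Precisely, if $f$ has $s = n^c$ terms, then after restricting $t$ variables (to kill the ANF down), the surviving function has degree $d$ with $\binom{d}{?}$... The clean statement: fix a flat of codimension $t = \sqrt{n}$ chosen so that the restricted function has degree $\le 2c$ — achievable because there are at most $n^c \le 2^{2c \sqrt{n}}$... no. I would instead directly invoke: any $f$ with $\|f\| \le n^c$ has, on a suitable flat of dimension $\Omega(n)$, degree at most $2c$ (this follows since a degree-$(2c+1)$ monomial, together with $n^c$ total monomials, can be forced to vanish by a codimension-$O(\sqrt n)$ affine restriction while keeping dimension $\Omega(n)$ — one sets $\sqrt n$ well-chosen linear combinations, and $n^c$ monomials each of "large" degree get killed because a large monomial hits the restricted coordinates with good probability). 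Applying the Cohen–Tal degree bound to this restricted function on $\Omega(n)$ variables with degree $\le 2c$ gives normality $\Omega\big((n)^{1/(2c-1)}\big)$, and since $2c - 1 < 4c$ for positive $c$... wait, that is better than claimed, so the loss of $4c$ must come from the restriction's dimension being $n^{1/2}$ rather than $\Omega(n)$: if the flat we restrict to has dimension only $\sqrt n$, then the degree-$\le O(c)$ restricted function gives normality $\Omega\big((\sqrt n)^{1/O(c)}\big) = \Omega\big(n^{1/(4c)}\big)$ with the constant chosen appropriately. So the plan concretely is: (1) WLOG $\|f\| \le n^c$; (2) restrict to a random flat of dimension $\lceil \sqrt n \rceil$; with positive probability every monomial of degree $> 2c$ dies while the function stays defined, because the expected number of surviving high-degree monomials is at most $n^c (\sqrt n / n)^{2c+1} = n^c \cdot n^{-(2c+1)/2} = n^{-1/2} < 1$; (3) the restricted $g$ on $\lceil \sqrt n\rceil$ variables has degree $\le 2c$, so by Cohen–Tal it is $\Omega\big((\sqrt n)^{1/(2c-1)}\big)$-normal; (4) a flat witnessing normality of $g$ lifts to a flat of the same dimension witnessing normality of $f$. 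The exponent is $\frac{1}{2(2c-1)} \ge \frac{1}{4c}$, giving $\mathcal{N}(f) \in \Omega(n^{1/(4c)})$. The part requiring genuine care is step (2) — verifying that the random restriction both kills all high-degree terms \emph{and} leaves a genuine $\lceil\sqrt n\rceil$-dimensional flat (not collapsing dimension), and handling the constant-versus-nonconstant case when $g$ could become constant, in which case $f$ is already $\lceil \sqrt n\rceil$-normal, which is even stronger.
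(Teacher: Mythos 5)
Your final committed plan is correct and is essentially the argument the paper attributes to Cohen and Tal: a probabilistic $0$-restriction to roughly $\sqrt{n}$ of the variables kills all monomials of degree exceeding $2c$ (expected survivors $\approx n^{c}\cdot n^{-(2c+1)/2}<1$), and then the degree-versus-normality bound $\Omega\left(m^{1/(d-1)}\right)$ applied with $m=\Theta(\sqrt{n})$, $d\leq 2c$ gives $\Omega\left(n^{1/(4c-2)}\right)\subseteq\Omega\left(n^{1/(4c)}\right)$, with the witnessing flat lifting back to $f$. Despite the many false starts in your write-up, the parameters in the step you commit to are right, so the proof goes through along the same lines as the cited two-step proof (probabilistic restriction, then the degree--normality relation).
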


The proof of this has two steps: First they show by probabilistic methods that 
$f$ has a restriction with a certain number of free variables
and a certain degree, and after this they appeal to a relation between 
degree and normality. Although the authors do study the algorithmic question of finding such a subspace, they do not propose an efficient algorithm for finding a subspace of such dimension. 
We will pay special attention to the following type of restrictions of Boolean functions.

\begin{definition}
Let $f\mapnbit$. Setting $k<n$ of the bits to $0$ results in a new function $f'$ on
$n-k$ variables. We say that $f'$ is a $0$-restriction of $f$.
\end{definition}

By inspecting the proof in the next section and
the proof of Theorem~\ref{avishayGilAlgThickNorm}, one can see that \emph{most} of the restrictions performed are in fact setting variables to $0$.
Furthermore, by inspecting the flat used for the attack performed in
\cite{DBLP:journals/pmh/MihaljevicGPI12} (section 5.3), one can see that it is
of this form as well. Determining whether a given function represented by its ANF admits a $0$-restriction $f'$ on $n-k$ variables with $f'$ constant corresponds exactly to the hitting set problem, and this is well known  to be $\mathbf{NP}$ complete \cite{DBLP:books/fm/GareyJ79}. Furthermore it remains $\mathbf{NP}$ complete even when restricted to quadratic functions (corresponding to the vertex cover problem).

This stands in contrast to Proposition~\ref{prop:quadraticnormality}; for quadratic functions and general flats (as opposed to just $0$-restrictions) the problem is polynomial time solvable.
To the best of our knowledge, the computational complexity of the following problem is open (see also \cite{DBLP:journals/corr/CohenT14}): Given a function, represented by its ANF, find a large(st) flat on which the function is constant.

\section{Majority Has High Algebraic Thickness}
For many functions, it is trivial to see that the ANF contains many terms, e.g. the function
\[
 f(\mathbf{x})=(1+x_1)(1+x_2)\cdots (1+x_n),
\]
which is $1$ if and only if all the inputs are $0$, contains all the possible $2^n$ terms in its ANF. However, we are not aware of any explicit function along with a proof of a strong (e.g. exponential) lower bound on the algebraic thickness.
Using a result from circuit complexity \cite{razborov1987lower}, it is straightforward to show that the 
\emph{majority function}, $MAJ_n$ has exponential algebraic thickness. 
$MAJ_n$ is $1$ if and only if at least half of the $n$ inputs are $1$.
In the following, an $AC_0[\oplus]$ circuit of depth $d$ is a circuit with inputs
$x_1,x_2,\ldots ,x_n,(1\oplus x_1),(1\oplus x_2),\ldots ,(1\oplus x_n)$.
The circuit contains $\land,\lor,\oplus$ (AND, OR, XOR) gates of unbounded fan-in, and every directed path contains at most $d$ edges.
First we need the following simple proposition:

\begin{proposition}
 Let $f\mapnbit$ have $\mathcal{T}(f)\leq T$. Then $f$ can be computed by an $AC_0[\oplus]$ circuit of depth $3$ with at most $n+T+1$ gates.
\end{proposition}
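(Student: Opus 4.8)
The plan is to build the circuit directly from the algebraic normal form of the function witnessing the small algebraic thickness. Recall that by the assumed input representation, $f = g \circ A$ where $A$ is an affine bijection on $\bbF_2^n$ and $\|g\| = \mathcal{T}(f) \le T$; write $g(\mathbf{y}) = \sum_{S} c_S \prod_{j \in S} y_j$ with at most $T$ sets $S$ having $c_S = 1$. Each coordinate $y_i$ of $\mathbf{y} = A\mathbf{x}$ is an affine form in $x_1,\ldots,x_n$, i.e. an XOR of some subset of the literals $x_1,\ldots,x_n,(1\oplus x_1),\ldots,(1\oplus x_n)$; the constant term of the affine form is absorbed by choosing whether to include one extra input bit fixed to $1$, or more simply by toggling between $x_i$ and $1\oplus x_i$ in the XOR. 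So each $y_i$ is computed by a single XOR gate at the bottom level, giving at most $n$ XOR gates.

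Next, for each term $\prod_{j \in S} y_j$ with $c_S = 1$, introduce one AND gate of fan-in $|S|$ taking as inputs the relevant $y_j$'s computed at the previous level; this is at most $T$ AND gates at the middle level. (A term with $S = \emptyset$ is just the constant $1$, which can be folded into the final gate.) Finally, $g(\mathbf{y})$ is the XOR over $\mathbb{F}_2$ of all these monomials, so one XOR gate of fan-in at most $T$ at the top collects them, for a total depth of $3$ (XOR, then AND, then XOR) and at most $n + T + 1$ gates, the $+1$ accounting for a possible constant-$1$ input feeding the top XOR. Since $A$ is a bijection, $g \circ A = f$, so this circuit computes $f$.

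The only thing requiring a moment's care is the bottom level: an affine form $a_0 \oplus \bigoplus_{i \in I} x_i$ with $a_0 = 1$ must be realized without an extra constant gate to stay within the $n + T + 1$ budget and the stated input convention. This is handled by noting that $1 \oplus x_i = (1 \oplus x_i)$ is itself an allowed input wire, so one picks any single $i \in I$ (such an $i$ exists since $A$ is invertible, hence no coordinate of $A\mathbf{x}$ is constant) and replaces $x_i$ by the literal $1 \oplus x_i$ in that XOR gate. Thus no genuine obstacle arises; the proposition is essentially the observation that the ANF of $g$ is itself a depth-$2$ $\Sigma\Pi$ formula over $\mathbb{F}_2$, and composing with the affine bijection adds just one layer of XOR gates underneath.
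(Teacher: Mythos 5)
Your construction is exactly the one the paper uses: a bottom layer of $n$ XOR gates computing the affine bijection, a middle layer of at most $T$ AND gates for the monomials of $g$, and a single top XOR gate, giving depth $3$ and at most $n+T+1$ gates. The extra care you take with the constant term of the affine forms (using the literals $1\oplus x_i$) is a fine, slightly more explicit treatment of a detail the paper leaves implicit, but the approach is the same.
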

\begin{proof}
 Suppose $f=g\circ A$ for some affine bijective mapping $A$.
In the first layer (the layer closest to the inputs) one can compute $A$ using $n$ XOR gates of fan-in at most $n$. Then by computing all the monomials independently, $g$ can be computed by an $AC_0[\oplus]$ circuit of depth $2$ using $T$ AND gates with fan-in at most $n$ and $1$ XOR gate of fan-in $T$. 
\qed
\end{proof}

Now we recall a result due to Razborov \cite{razborov1987lower}, see also \cite[12.24]{DBLP:books/daglib/0028687}

\begin{theorem}[Razborov]
 Every unbounded fan-in depth-d circuit over $\{\land,\lor,\oplus \}$ computing $MAJ_n$
requires $2^{\Omega(n^{1/(2d)})}$ gates.
\end{theorem}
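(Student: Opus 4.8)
The plan is to prove this lower bound by the Razborov--Smolensky \emph{polynomial method}, whose two halves pull in opposite directions. The \emph{upper} half shows that any small, shallow circuit over $\{\land,\lor,\oplus\}$ computes a function that agrees with a low-degree polynomial over $\bbF_2$ on almost all inputs; the \emph{lower} half shows that $MAJ_n$ disagrees with \emph{every} low-degree $\bbF_2$-polynomial on a constant fraction of inputs. Setting the two degree bounds against one another forces the circuit size to be exponential.

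For the upper half I would construct a \emph{probabilistic polynomial} for a depth-$d$, size-$s$ circuit. Since $\oplus$ gates are exactly $\bbF_2$-linear, they never raise the degree of the polynomials feeding into them. For an $\lor$ gate (and dually $\land$) on inputs $z_1,\dots,z_m$ I would use the standard randomized approximation $1\oplus\prod_{j=1}^{\ell}\bigl(1\oplus\bigoplus_{i\in T_j}z_i\bigr)$, where each $T_j\subseteq[m]$ includes each index independently with probability $1/2$; this polynomial has degree $\ell$ and errs with probability at most $2^{-\ell}$ on any fixed input. Taking $\ell=\log(10s)$ and union-bounding over all $s$ gates keeps the total error below $1/10$ for every fixed input, while along any root-to-leaf path the degree is multiplied by $\ell$ only at the at most $d$ layers of $\land/\lor$ gates. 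Averaging over inputs and then fixing the randomness yields a single $\bbF_2$-polynomial $P$ of degree $r\le(c\log s)^{d}$ that agrees with the circuit's function on at least a $9/10$ fraction of $\bbF_2^n$.

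The heart of the proof, and the step I expect to be the main obstacle, is the lower half: every $\bbF_2$-polynomial of degree $r$ agrees with $MAJ_n$ on at most a $\tfrac12+O(r/\sqrt n)$ fraction of inputs, so agreement on a $9/10$ fraction forces $r=\Omega(\sqrt n)$. I would establish this anticoncentration bound by a Smolensky-style dimension count: assuming $P$ equals $MAJ_n$ on a set $G$, one argues that every function $G\to\bbF_2$ is representable by an $\bbF_2$-polynomial of degree at most $n/2+r$, and then compares the number of such polynomials, at most $2^{\sum_{i\le n/2+r}\binom{n}{i}}$, against the $2^{|G|}$ functions on $G$; since $\sum_{i\le n/2+r}\binom{n}{i}=2^{n}\bigl(\tfrac12+O(r/\sqrt n)\bigr)$ by the central binomial estimate, $|G|/2^n>\tfrac12+O(r/\sqrt n)$ is a contradiction. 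The delicate point here is the degree-reduction claim, namely rewriting each monomial of degree exceeding $n/2+r$ as a lower-degree polynomial on $G$ using the relation $MAJ_n=P$; this is where the majority-specific combinatorics lives, and getting the exact $\tfrac12+O(r/\sqrt n)$ calibration is the technical crux. (As a fallback one could instead invoke Smolensky's theorem that $\mathrm{MOD}_3\notin AC_0[\oplus]$ together with a constant-depth threshold reduction computing $\mathrm{MOD}_3$ from $MAJ$, but threading the precise exponent $1/(2d)$ through that reduction is less direct.) Finally I would combine the halves: $r\le(c\log s)^{d}$ and $r=\Omega(\sqrt n)$ give $c\log s\ge\Omega(n^{1/(2d)})$, hence $s\ge 2^{\Omega(n^{1/(2d)})}$.
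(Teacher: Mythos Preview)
The paper does not prove this theorem; it simply \emph{cites} Razborov's result (with a pointer to Jukna's textbook) and uses it as a black box to lower-bound $\mathcal{T}(MAJ_n)$. So there is no ``paper's own proof'' to compare against.

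Your outline is the standard Razborov--Smolensky polynomial-method proof, and it is essentially correct. A couple of remarks on the part you yourself flag as the crux. The degree-reduction step for $MAJ_n$ over $\bbF_2$ is usually phrased as follows: every $f\colon\bbF_2^n\to\bbF_2$ can be written as $f=g+h\cdot MAJ_n$ with $\deg g,\deg h\le\lceil n/2\rceil$; on the agreement set $G$ one then replaces $MAJ_n$ by $P$ to get $\deg(f|_G)\le\lceil n/2\rceil+r$. The decomposition itself is proved by the observation that for $|S|>n/2$ one has $x_S=x_S\cdot MAJ_n$ identically, and then one uses the complementation $x\mapsto \mathbf{1}+x$ to flip high-degree monomials into low-degree ones times $MAJ_n$; the details are in the reference the paper points to (Jukna, \S12). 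Your dimension count $\sum_{i\le n/2+r}\binom{n}{i}=2^n\bigl(\tfrac12+O(r/\sqrt n)\bigr)$ and the final arithmetic $(c\log s)^d\ge\Omega(\sqrt n)\Rightarrow s\ge 2^{\Omega(n^{1/(2d)})}$ are both fine. The fallback via $\mathrm{MOD}_3$ you mention would also work but, as you note, makes the exponent bookkeeping less clean.
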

Combining these two results, we immediately have the following result that the majority function $MAJ_n$ has high algebraic thickness.

\begin{proposition}
 $\mathcal{T}(MAJ_n)\geq 2^{\Omega(n^{1/6})}$.
\end{proposition}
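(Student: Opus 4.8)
The plan is to combine the two preceding results in the obvious way: the preceding proposition converts a thickness bound into a depth-$3$ circuit upper bound, and Razborov's theorem gives a matching lower bound for $MAJ_n$. Concretely, suppose $\mathcal{T}(MAJ_n)\leq T$. By the proposition just stated, $MAJ_n$ is computed by an $AC_0[\oplus]$ circuit of depth $3$ with at most $n+T+1$ gates. Since such a circuit is in particular an unbounded fan-in depth-$3$ circuit over $\{\land,\lor,\oplus\}$ (the negated literals $1\oplus x_i$ appearing only as inputs), Razborov's theorem with $d=3$ forces $n+T+1\geq 2^{\Omega(n^{1/(2\cdot 3)})}=2^{\Omega(n^{1/6})}$.

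Rearranging, $T\geq 2^{\Omega(n^{1/6})}-n-1$. The additive $-n-1$ is absorbed into the exponent: for large $n$, $2^{\Omega(n^{1/6})}$ dominates any polynomial in $n$, so (after shrinking the hidden constant in the $\Omega$ if necessary) we still have $T\geq 2^{\Omega(n^{1/6})}$. Since this holds for every $T\geq \mathcal{T}(MAJ_n)$, in particular $\mathcal{T}(MAJ_n)\geq 2^{\Omega(n^{1/6})}$, which is the claim.

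I do not anticipate a genuine mathematical obstacle here; the only point requiring a little care is the match between circuit models, namely checking that the depth-$3$ $AC_0[\oplus]$ circuit produced by the proposition is a legitimate instance of the circuit class to which Razborov's lower bound applies, and that the extra $n$ XOR gates used to compute the affine map $A$ (together with the single output XOR) do not change the asymptotics. Both are immediate: the gate set is exactly $\{\land,\lor,\oplus\}$, the depth is a fixed constant, and $n+1$ is negligible against $2^{\Omega(n^{1/6})}$. Thus the ``hard part'' is really just invoking Razborov's theorem as a black box, and the proposition does all the bridging work.
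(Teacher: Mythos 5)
Your argument is correct and is exactly the paper's proof: combine the preceding proposition (thickness $T$ implies a depth-$3$ $AC_0[\oplus]$ circuit with $n+T+1$ gates) with Razborov's lower bound for $MAJ_n$ at depth $d=3$, and absorb the additive $n+1$ into the exponential bound. No issues.
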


\section{Algebraic Thickness and Normality}

This section is devoted to showing that functions with
algebraic thickness at most $n^{3-\epsilon}$ are constant on flats of somewhat large dimensions. Furthermore our proof reveals a polynomial time algorithm to find such a subspace.
In the following, a term of degree at least $3$ will be called a \emph{crucial} term, and for a function $f$, the number of crucial terms will be denoted $\cruc (f)$.

Our approach can be divided into two steps: First it uses $0$-restrictions
to obtain a quadratic function, and after this we can use  Proposition~\ref{prop:quadraticnormality}.
As implied by the relation between $0$-restrictions and the hitting set problem, finding the optimal $0$-restrictions is indeed a computationally hard task. Nevertheless, as we shall show in this section, the following greedy algorithm gives reasonable guarantees.

The greedy algorithm simply works by continually finding the variable that is contained in the most crucial terms, and sets this variable to $0$. It finishes when there are no crucial terms. We show that when the greedy algorithm finishes, the number of variables left, $n'$, is relatively large as a function of $n$ (for a more precise statement, see Theorem~\ref{thm:subbatov}). Notice that we are only interested in the behavior of $n'$ as a function of $n$, and that this is not necessarily related to the approximation ratio of the greedy algorithm, which is known to be $\Theta(\log n)$ \cite{DBLP:journals/jcss/Johnson74a}.

We begin with a simple proposition about the greedy algorithm that will be useful throughout the section, and it gives a tight bound.
\begin{proposition}
\label{prop:greedyLB}
  Let $g\mapnbit$ have $\cruc(g)\geq m$. Then some variable
$x_j$ is contained in at least $\left\lceil 3\frac{m}{n}\right\rceil$
crucial terms.
\end{proposition}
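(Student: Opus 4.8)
The plan is to do a straightforward averaging (pigeonhole) argument. A crucial term is a monomial $\prod_{j\in S}x_j$ with $|S|\geq 3$ and $c_S=1$, and $g$ has at least $m$ of them. First I would count incidences between variables and crucial terms: for each crucial term $S$, it contributes $|S|\geq 3$ incidences, one for each variable appearing in it, so the total number of (variable, crucial term) incidences is at least $3m$. On the other hand, there are only $n$ variables, so by averaging, some variable $x_j$ must be contained in at least $\lceil 3m/n\rceil$ crucial terms; the ceiling is justified because the count of crucial terms containing a fixed variable is an integer, and an integer that is at least the real number $3m/n$ is at least $\lceil 3m/n\rceil$.

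More carefully: let $d_j$ denote the number of crucial terms containing $x_j$. Then $\sum_{j=1}^n d_j = \sum_{S : |S|\geq 3,\, c_S=1} |S| \geq 3\cdot\cruc(g) \geq 3m$. Hence $\max_j d_j \geq \frac{1}{n}\sum_{j=1}^n d_j \geq \frac{3m}{n}$, and since $\max_j d_j\in\mathbb{Z}$ we conclude $\max_j d_j\geq \left\lceil 3\frac{m}{n}\right\rceil$.

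There is essentially no obstacle here; the only thing to be a little careful about is the tightness claim, which presumably is witnessed by taking $g$ to be a sum of $m$ disjoint (variable-disjoint) cubic monomials when $3m\leq n$, or more generally a symmetric-looking construction where every variable is in exactly $\lceil 3m/n\rceil$ (or nearly so) crucial terms, e.g. a ``cyclic'' family of consecutive triples $\{x_i,x_{i+1},x_{i+2}\}$ so that the incidence count $3m$ is spread as evenly as possible over the $n$ variables. I would just remark that this shows the bound $\lceil 3m/n\rceil$ cannot be improved in general, rather than belabor an explicit extremal example.
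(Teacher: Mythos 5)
Your proof is correct and uses the same averaging/pigeonhole argument as the paper: each of the at least $m$ crucial terms is multilinear with at least $3$ distinct variables, giving at least $3m$ variable occurrences spread over $n$ variables, so some variable lies in at least $\left\lceil 3\frac{m}{n}\right\rceil$ crucial terms. The extra remark on tightness is fine but not needed for the statement itself.
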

\begin{proof}
We can assume that no variable occurs twice in the same term. Hence the total number of variable occurrences in crucial terms is at least $3m$. By the pigeon hole principle, some variable is contained in at least $\left\lceil 3\frac{m}{n}\right\rceil$ terms.
\qed
\end{proof}

 The following lemma is a special case where a tight result
can be obtained. It is included here because the result is
tight, and it gives a better constant in Theorem~\ref{thm:subbatov}
than one would get by simply removing terms one at a time. 
The result applies to functions with relatively small
thickness, and a later lemma reduces functions with somewhat larger
thickness to this case.
 \begin{lemma}
 \label{bootstraplemma}
 Let $c\leq \frac{2}{3}$ and let $f\mapnbit$ have
 $\cruc (f)\leq cn$.
 Then $f$ has a $0$-restriction $f'$
 on $n'= n-\inceil{ \frac{3c-1}{5}n }$ variables
 with $\cruc (f')\leq \frac{n'}{3}$.
 \end{lemma}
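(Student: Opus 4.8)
The plan is to analyze the greedy procedure described above directly, tracking through its execution the number of surviving variables $n_i = n-i$ and the number of surviving crucial terms $m_i = \cruc(f_i)$, where $f_0=f$ and $f_{i+1}$ is obtained from $f_i$ by zeroing the variable contained in the most crucial terms of $f_i$. The first point to record is that setting a variable to $0$ only deletes monomials and never changes the degree of a surviving monomial, so $f_{i+1}$ has crucial terms exactly those crucial terms of $f_i$ not containing the chosen variable; in particular $m_{i+1}=m_i-d_{i+1}$, where $d_{i+1}\ge 1$ whenever $m_i\ge 1$, and by Proposition~\ref{prop:greedyLB} applied to $f_i$ the greedy choice gives $d_{i+1}\ge \inceil{3m_i/n_i}$.

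Set $k=\inceil{\frac{3c-1}{5}n}$ (if $3c\le 1$ then $k\le 0$ and $f'=f$ already works, so assume $k\ge 1$) and run the algorithm for exactly $k$ steps, outputting $f'=f_k$ on $n'=n-k$ variables. I would then split into two cases according to whether the target inequality ``(crucial terms) $\le$ (variables)$/3$'' is ever reached within these steps. \textbf{Case A:} if $m_j\le n_j/3$ for some $j\le k$, the inequality persists, because if $m_j\ge 1$ the next step deletes at least one crucial term and $m_{j+1}\le m_j-1\le n_j/3-1\le (n_j-1)/3=n_{j+1}/3$, while if $m_j=0$ it persists trivially; by induction $m_k\le n_k/3$. \textbf{Case B:} otherwise $m_i>n_i/3$ for all $i\in\{0,\ldots,k-1\}$, hence $3m_i/n_i>1$ and $d_{i+1}\ge\inceil{3m_i/n_i}\ge 2$ for each such $i$, so $m_k=m-\sum_{i=1}^k d_i\le m-2k$.

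It remains to close the arithmetic in Case B: using $m\le cn$ and $5k\ge(3c-1)n$ we get $3m_k\le 3cn-6k\le (n+5k)-6k=n-k=n'$, i.e. $m_k\le n'/3$. In both cases $\cruc(f')=m_k\le n'/3$, and the procedure is just the greedy algorithm stopped after a fixed number of steps, hence efficient, which proves the lemma.

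The only real subtlety — and the reason for the specific constants in the statement — is matching the per-step progress guaranteed by Proposition~\ref{prop:greedyLB} to the budget $k$. The naive multiplicative estimate $m_{i+1}\le m_i(1-3/n_i)$ is too lossy (it would force removing roughly $(1-1/\sqrt{3c})\,n>\frac{3c-1}{5}n$ variables for $c$ near $\tfrac23$); the right bookkeeping is the additive one, ``$\ge 2$ crucial terms removed per step as long as we are above the target threshold'', which is exactly what the case split formalizes, since below the threshold we can no longer guarantee removing two terms but also no longer need to. With ``$\ge 2$ per step'' and $5k\ge(3c-1)n$ one lands essentially exactly on $\cruc(f')\le n'/3$, so I expect the proof to be short once the case analysis is set up correctly; the main thing to be careful about is the persistence claim in Case A (the small floor/ceiling slack $n_j/3-1\le(n_j-1)/3$) so that one may always run the full $k$ steps regardless of when the threshold is first met.
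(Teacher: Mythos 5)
Your proof is correct and follows essentially the same route as the paper: run the greedy algorithm for $\inceil{\frac{3c-1}{5}n}$ steps, use Proposition~\ref{prop:greedyLB} to guarantee at least two crucial terms are removed per step while $\cruc > n_i/3$, and close with the same arithmetic $3(cn-2k)\le n-k$. Your explicit Case A/Case B split (and the persistence argument once the threshold $m_j \le n_j/3$ is first reached) is in fact slightly more careful than the paper's phrasing, which simply runs the algorithm ``until the condition is obtained'' and bounds the number of iterations.
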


 \begin{proof}
 Let the greedy algorithm run until a function $f'$ on $n'$ variables with
 $\cruc(f')\leq \frac{n'}{3}$ is obtained. 
 By Proposition~\ref{prop:greedyLB} we eliminate at least $2$ terms
 in each step.
 The number of algorithm iterations is at most $\inceil{\frac{3c-1}{5}n}$.
 Indeed, let
 \(
 \inceil{ \frac{3c-1}{5}n }=\frac{3c-1}{5}n+\delta
 \)
 for some
 \(
 0\leq \delta< 1
 \).
 After this number of iterations the number of variables left is
 \[
         n'=n-\frac{3c-1}{5}n-\delta=\frac{6-3c}{5}n-\delta
 \]
 and the number of critical terms is at most 
 \[
 cn-2\left(\frac{3c-1}{5}n-\delta\right)
 =
  \frac{2-c}{5}n-2\delta.
 \]
 In particular $\frac{n'}{3}\geq \frac{2-c}{5}n-2\delta$.
\qed
 \end{proof}

 Lemma~\ref{bootstraplemma} is essentially tight.

\begin{proposition}
 Let $\frac{1}{3}< c\leq \frac{2}{3}$ be arbitrary but rational. Then for infinitely many values of $n$, there exists a function on $n$ variables with $\cruc(f)=cn$ such that every $0$-restriction $f'$ on $n'>n-\left\lceil \frac{3c-1}{5}n\right\rceil $ variables has $\cruc(f)> \frac{n'}{3}$.
\end{proposition}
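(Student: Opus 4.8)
The plan is to construct, for each suitable $n$, a concrete function $f$ whose crucial terms are arranged so tightly that the greedy algorithm's bound is exactly attained, and moreover so that \emph{no} $0$-restriction (not just the greedy one) can do better. The natural candidate is to take crucial terms that are pairwise disjoint in their variable sets, so that setting any one variable to $0$ kills at most one crucial term. Concretely, partition (most of) the $n$ variables into $cn$ disjoint blocks of $3$ variables each and let $f = \sum_{i} x_{3i-2} x_{3i-1} x_{3i}$, where the sum runs over $i = 1,\dots,cn$; since $c\le 2/3$ this uses at most $2n$ variables, so for infinitely many $n$ (those divisible by the denominator of $c$ and large enough) the construction fits. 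This $f$ has $\cruc(f) = cn$ exactly.

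The key observation is that because the crucial terms are variable-disjoint, each variable set to $0$ destroys at most one crucial term, and each crucial term is destroyed only if one of its three variables is set to $0$. So to bring the number of crucial terms down from $cn$ to some target $t$, one must set at least $cn - t$ variables to $0$. If $f'$ is a $0$-restriction on $n'$ variables, then $n - n'$ variables were set to $0$, hence $\cruc(f') \ge cn - (n-n')$. We want to show that if $n' > n - \inceil{\frac{3c-1}{5}n}$, then $\cruc(f') > \frac{n'}{3}$. Suppose for contradiction $\cruc(f') \le \frac{n'}{3}$; then $cn - (n - n') \le \frac{n'}{3}$, i.e. $(c-1)n + n' \le \frac{n'}{3}$, i.e. $\frac{2}{3}n' \le (1-c)n$, i.e. $n' \le \frac{3(1-c)}{2}n$. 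Comparing with the hypothesis $n' > n - \frac{3c-1}{5}n = \frac{6-3c}{5}n$, it suffices to check that $\frac{6-3c}{5} \ge \frac{3(1-c)}{2}$, equivalently $2(6-3c) \ge 15(1-c)$, i.e. $12 - 6c \ge 15 - 15c$, i.e. $9c \ge 3$, i.e. $c \ge \frac{1}{3}$, which holds strictly by hypothesis. This gives the contradiction, so every $0$-restriction on more than $n - \inceil{\frac{3c-1}{5}n}$ variables must retain more than $\frac{n'}{3}$ crucial terms.

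I expect the only real subtlety to be the bookkeeping around ceilings and divisibility: one must choose $n$ along an arithmetic progression so that $cn$ is an integer (so the block construction is exact) and so that $f$ genuinely has exactly $cn$ crucial terms and not more; the extra at most $n/3$ unused variables can simply be left as dummy variables not appearing in the ANF (or one can pad with a single linear term if one wants $f$ to depend on all variables, which does not affect $\cruc$). The inequality chain above is robust to the ceiling since $n' > n - \inceil{\cdot}$ is a strict integer inequality and the reduction to $c > 1/3$ has slack for all $n$. A remark worth including: this shows the greedy analysis of Lemma~\ref{bootstraplemma} cannot be improved by any choice of which variables to zero out, so the factor $\frac{3c-1}{5}$ is intrinsic to the $0$-restriction technique, not an artifact of greediness.
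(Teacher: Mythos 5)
Your construction cannot exist, and this is the fatal gap. You want $cn$ pairwise variable-disjoint degree-$3$ terms among only $n$ variables, but that requires $3cn\leq n$, i.e.\ $c\leq \frac{1}{3}$, while the proposition is precisely about $\frac{1}{3}<c\leq\frac{2}{3}$. Your own accounting shows the problem: you note the blocks use ``at most $2n$ variables,'' but only $n$ variables are available, so ``the construction fits'' is false for every $c$ in the stated range. There is also a strong a priori signal that the approach proves too much: if a function with $cn$ disjoint crucial terms existed, your counting (each zeroed variable kills at most one term, so $\cruc(f')\geq cn-(n-n')$) would show that no $0$-restriction setting fewer than roughly $\frac{3c-1}{2}n$ variables can reach $\frac{n'}{3}$ crucial terms; but Lemma~\ref{bootstraplemma} guarantees that $\inceil{\frac{3c-1}{5}n}$ variables always suffice, for \emph{every} function with $\cruc(f)\leq cn$. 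So a witness with your property would contradict a proven lemma, which is exactly why the counting obstruction rules it out.

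The repair is to use a gadget in which every variable lies in exactly \emph{two} crucial terms, so that the density of terms per variable can reach $\frac{2}{3}$ while each zeroed variable kills at most $2$ terms (matching the loss rate in the proof of Lemma~\ref{bootstraplemma}). The paper uses $x_1x_2x_3+x_1x_4x_5+x_2x_4x_6+x_3x_5x_6$ on $6$ variables ($4$ terms, each variable in two of them) and sums disjoint copies to get $c=\frac{2}{3}$ on $n=30m$ variables; then setting any $t$ variables to $0$ leaves at least $cn-2t$ crucial terms, and $cn-2t>\frac{n-t}{3}$ exactly when $t<\frac{(3c-1)n}{5}$, which is the claimed threshold (for $\frac{1}{3}<c<\frac{2}{3}$ one first runs the greedy algorithm partway on such a function to reach density $c$ and applies the same count). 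Your final inequality chain is fine as algebra, but it rests on the per-variable loss of $1$, which is only available at densities $c\leq\frac{1}{3}$; with the correct per-variable loss of $2$ the threshold $\frac{3c-1}{5}$ emerges instead of $\frac{3c-1}{2}$.
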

\begin{proof}
 Let $\frac{1}{3}< c\leq \frac{2}{3}$ be fixed and consider the function on $6$ variables:
 \[
 f(x)=x_1 x_2 x_3 + x_1 x_4 x_5 + x_2 x_4 x_6 + x_3 x_5 x_6.
 \]
The greedy algorithm sets this functions to $0$ by killing two variables, and this is optimal. Furthermore setting any one variable to $0$ kills exactly two terms.
Now consider the following function defined on $n=30m$ variables and
having $20m$ terms. For
 convenience we index the variables by $x_{i,j}$ for $1\leq i\leq 5m$, $1\leq j\leq 6$. Let
 \[
  g(x)=\sum_{i=1}^{5m}f(x_{i,1},x_{i,2},x_{i,3},x_{i,4},x_{i,5},x_{i,6}).
 \]
Again here the greedy algorithm is optimal, and setting $6m$ variables
to zero leaves $n'=24m$ variables and $8m$ terms remaining.
Thus, the bound from Lemma~\ref{bootstraplemma} is met with equality for $c=\frac{2}{3}$.

To see that it is tight for $c<\frac{2}{3}$, consider the function, $\tilde{f}$ on $n$ variables, where $n$ is a multiple of $30$ such that 
$c\frac{4}{3}\frac{n}{2-c}$ is an integer. Run the greedy algorithm until the number of variables is $\tilde{n}$ and $\cruc(\tilde{f})=c\tilde{n}$ (assuming $c\tilde{n}$ is an integer). At this point $\tilde{n}=\frac{4}{3}\frac{n}{2-c}$ and the number of terms left is $c\tilde{n}$. Again, by the structure of the function, setting any number, $t$, of
the variables to $0$ results in 
a function with $\tilde{n}-t$ variables and at least $c\tilde{n}-2t$ terms. When $t<\frac{(3c-1)\tilde{n}}{5}$, we have $c\tilde{n}-2t>\frac{\tilde{n}-t}{3}$.

\qed
 
\end{proof}

An immediate corollary to Lemma~\ref{bootstraplemma} is the following.

\begin{corollary}
\label{cor:combiner}
 Let $f\mapnbit$ have $\cruc (f)\leq \frac{2}{3}n$. Then it is constant on a flat of dimension $n'\geq \left\lfloor \frac{\frac{2}{3}\left\lfloor \frac{4}{5}n\right\rfloor}{2}\right\rfloor\geq \frac{4}{15}n-2$. Furthermore, such a flat can be found efficiently.
\end{corollary}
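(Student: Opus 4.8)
The plan is to chain together the two tools already in hand. First I would apply Lemma~\ref{bootstraplemma} with $c=\frac23$: since $\frac{3c-1}{5}=\frac15$ and $n-\inceil{\frac n5}=\infloor{\frac45 n}$, this produces a $0$-restriction $f_1$ on $N:=\infloor{\frac45 n}$ variables with $\cruc(f_1)\le \frac N3$. I would then simply keep running the greedy algorithm on $f_1$. By Proposition~\ref{prop:greedyLB} every iteration that still has a crucial term removes at least one of them (the bound $\inceil{3m/k}$ is at least $1$ as soon as $m\ge 1$), so after at most $\cruc(f_1)$ further iterations, each of which kills one variable, we are left with a function $f_2$ with no crucial terms, i.e.\ a quadratic function, on at least $N-\cruc(f_1)\ge N-\frac N3=\frac23 N=\frac23\infloor{\frac45 n}$ variables.

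Now I would apply Proposition~\ref{prop:quadraticnormality} to $f_2$: it is constant on a flat of dimension at least $\infloor{k/2}$, where $k\ge \frac23\infloor{\frac45 n}$ is the number of variables of $f_2$; since $x\mapsto \infloor{x/2}$ is non-decreasing this dimension is at least $\infloor{\tfrac12\cdot\tfrac23\infloor{\tfrac45 n}}=\infloor{\tfrac{\frac23\infloor{\frac45 n}}{2}}$. A $0$-restriction setting $t$ variables to $0$ is itself a flat of $\bbF_2^n$ of dimension $n-t$, and a flat inside it is still a flat of $\bbF_2^n$ of the same dimension, so this final flat witnesses the normality of $f$ itself. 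The explicit estimate $\frac{4}{15}n-2$ then follows by expanding $\tfrac12\cdot\tfrac23 t=\tfrac t3$ and using $\infloor{\tfrac45 n}\ge \tfrac45 n-1$ together with $\infloor{t}\ge t-1$.

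For the efficiency claim, the greedy algorithm runs in time polynomial in $n+\cruc(f)$ (each step counts variable occurrences among the crucial terms, picks a maximizer, and substitutes $0$), and the flat for the quadratic function $f_2$ is produced efficiently by Theorem~\ref{thm:dickson}, as asserted in Proposition~\ref{prop:quadraticnormality}. I do not expect a genuine obstacle: the only points requiring care are the floor/ceiling bookkeeping and the observation that the greedy steps performed after Lemma~\ref{bootstraplemma} still make progress even once $\cruc$ has dropped below the number of variables — which is exactly the (weak) guarantee of Proposition~\ref{prop:greedyLB}.
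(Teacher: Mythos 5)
Your proof is correct and follows essentially the same route as the paper: apply Lemma~\ref{bootstraplemma} with $c=\frac{2}{3}$, then eliminate the at most $\frac{1}{3}\left\lfloor\frac{4}{5}n\right\rfloor$ remaining crucial terms by setting one variable per term (your ``continue greedy'' phrasing accomplishes exactly this), and finish with Proposition~\ref{prop:quadraticnormality} via Theorem~\ref{thm:dickson}. The floor/ceiling bookkeeping and the observation that a flat inside a $0$-restriction is a flat of $\bbF_2^n$ are fine, so no gaps.
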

\begin{proof}
 First apply Lemma~\ref{bootstraplemma} to obtain a function on $n'= \left\lfloor \frac{4}{5}n\right\rfloor$ variables with at most $\frac{n'}{3}$ crucial terms. Now set one variable in each crucial term to $0$, so after this we have at least $\frac{2}{3}\left\lfloor \frac{4}{5}n\right\rfloor$ variables left and the remaining function is quadratic. Applying 
Theorem~\ref{thm:dickson} gives the result.\qed
\end{proof}

The following lemma generalizes the lemma above to the case with more terms. The analysis of the greedy algorithm uses ideas similar to those used in certain formula lower bound proofs, see e.g. \cite{subbotovskaya1961realizations} or \cite[Section 6.3]{DBLP:books/daglib/0028687}.
\begin{lemma}
\label{lem:subbotovstyle}
 Let $f\mapnbit$ with $\cruc(f)\leq n^{3-\epsilon}$, for $0<\epsilon<2$. Then there exists a $0$-restriction
$f'$ on $n'=\left\lfloor \sqrt{\frac{2}{3}n^{\epsilon}}\right\rfloor$ variables with $\cruc(f')\leq \frac{2}{3}n'$.
\end{lemma}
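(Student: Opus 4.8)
The plan is to analyze the greedy algorithm directly, tracking how fast the number of crucial terms drops relative to the number of variables as we repeatedly zero out the heaviest variable. The key quantitative input is Proposition~\ref{prop:greedyLB}: whenever a function on $k$ variables has $m$ crucial terms, some variable sits in at least $\lceil 3m/k \rceil \geq 3m/k$ of them, so one greedy step on $k$ variables removes at least a $3/k$ fraction of the current crucial terms. Writing $m_k$ for the number of crucial terms when $k$ variables remain (starting from $m_n \leq n^{3-\epsilon}$), this gives the recurrence $m_{k-1} \leq m_k\bigl(1 - \tfrac{3}{k}\bigr)$, and iterating from $n$ down to some target $n'$ yields
\[
m_{n'} \;\leq\; n^{3-\epsilon}\prod_{k=n'+1}^{n}\Bigl(1-\frac{3}{k}\Bigr).
\]

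\textbf{Estimating the product.} The next step is to bound the product. Using $1 - 3/k \leq \exp(-3/k)$ and $\sum_{k=n'+1}^n 1/k \geq \ln(n/n')$ (comparing the sum to an integral), we get $\prod_{k=n'+1}^n (1-3/k) \leq \exp(-3\ln(n/n')) = (n'/n)^3$. Hence
\[
m_{n'} \;\leq\; n^{3-\epsilon}\cdot\frac{(n')^3}{n^3} \;=\; \frac{(n')^3}{n^{\epsilon}}.
\]
Now we want $m_{n'} \leq \tfrac{2}{3}n'$, i.e. $(n')^3/n^\epsilon \leq \tfrac{2}{3}n'$, i.e. $(n')^2 \leq \tfrac{2}{3}n^\epsilon$, i.e. $n' \leq \sqrt{\tfrac{2}{3}n^\epsilon}$. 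Taking $n' = \lfloor \sqrt{\tfrac{2}{3}n^\epsilon}\rfloor$ and running the greedy algorithm until exactly that many variables remain produces the desired $0$-restriction $f'$ with $\cruc(f') \leq \tfrac{2}{3}n'$. (One should note that once $\cruc$ hits $0$ the algorithm could stop early, but that only helps — a function with no crucial terms trivially satisfies the bound, and we can simply declare $f'$ to be the function at $n'$ variables, setting surplus variables to $0$ arbitrarily if greedy terminated before reaching $n'$.)

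\textbf{Main obstacle.} I expect the only real subtlety is handling the discrepancy between the clean exponential bound and the integer ceilings: Proposition~\ref{prop:greedyLB} gives $\lceil 3m/k\rceil$, not $3m/k$, and for small $m$ (when $m < k/3$) the ceiling contributes a $+1$, so the multiplicative recurrence $m_{k-1}\leq m_k(1-3/k)$ is not literally valid in that regime — it should be $m_{k-1} \leq m_k - 1$ there. However, this is harmless for the conclusion: the multiplicative analysis above is used only to drive $m_k$ down while $m_k$ is still comparatively large; as soon as $m_k$ has dropped to roughly $\tfrac{2}{3}k$ we are already done, and in fact the additive regime is strictly more favorable. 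So the careful write-up would split into the case $m_k \geq k/3$ (multiplicative, giving the $(n'/n)^3$ factor) and verify that we reach the target $n'$ before leaving that regime, or alternatively keep the whole argument multiplicative by noting $\lceil 3m/k \rceil \geq 3m/k$ always and that dropping a fractional term is only an accounting device — the true integer count of removed terms is at least $3m/k$ whenever that quantity is at least $1$, and the target inequality $(n')^2 \leq \tfrac23 n^\epsilon$ is exactly the break-even point. Since $0 < \epsilon < 2$ guarantees $n' = o(n)$, the product telescopes over a long enough range for the bound to bite, and the lemma follows. The whole procedure is clearly polynomial time, as it is just at most $n$ iterations of scanning the ANF for the most frequent variable.
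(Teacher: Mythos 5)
Your overall strategy is the same as the paper's: run the greedy algorithm, use Proposition~\ref{prop:greedyLB} to obtain the recurrence $m_{k-1}\le m_k\left(1-\frac{3}{k}\right)$, and bound the resulting product by $\left(\frac{n'}{n}\right)^3$. (Your ``main obstacle'' is in fact a non-issue: since $\lceil 3m/k\rceil\ge 3m/k$ whenever $m\ge 1$, the multiplicative recurrence is literally valid in every regime, as you yourself note in passing; no case split is needed.)

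However, the step where you estimate the product contains a genuine error: the inequality $\sum_{k=n'+1}^{n}\frac1k\ge\ln(n/n')$ is reversed. Comparing the sum with the integral gives $\sum_{k=n'+1}^{n}\frac1k\le\int_{n'}^{n}\frac{dx}{x}=\ln\frac{n}{n'}$, and only $\sum_{k=n'+1}^{n}\frac1k\ge\ln\frac{n+1}{n'+1}$ holds. Thus your argument, corrected, yields only $m_{n'}\le n^{3-\epsilon}\left(\frac{n'+1}{n+1}\right)^3$ rather than $n^{3-\epsilon}\left(\frac{n'}{n}\right)^3$, and this weaker bound does not always reach the stated constants: since $n'=o(n)$ for $\epsilon<2$, the extra factor $\left(1+\frac{1}{n'}\right)^3$ is not compensated by the negligible gain $\left(\frac{n}{n+1}\right)^3$, so when $\sqrt{\tfrac23 n^{\epsilon}}$ is only slightly above an integer the target $\frac23 n'$ can be missed. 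The product bound you assert is nevertheless true; the clean way to get it (and the paper's way) is the pointwise inequality $1-\frac3k\le\left(\frac{k-1}{k}\right)^3$, under which the product telescopes exactly to $\left(\frac{n'}{n}\right)^3$, and then $n^{3-\epsilon}\left(\frac{n'}{n}\right)^3=\frac{(n')^2\,n'}{n^{\epsilon}}\le\frac23 n'$ holds even after taking the floor in $n'=\left\lfloor\sqrt{\tfrac23 n^{\epsilon}}\right\rfloor$. So the algorithmic idea and the structure of your proof match the paper; the flaw is confined to the analytic estimate of the product, and it is fixable by replacing the exponential/integral comparison with the telescoping inequality above.
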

\begin{proof}
 Let $\cruc(f)=T$. Then, by Proposition~\ref{prop:greedyLB}. Setting the variable contained  in the largest number of terms to $0$, the number of crucial terms left is at most
\[
 T-\frac{3T}{n}=T\cdot \left(1-\frac{3}{n}\right)\leq T\cdot\left(\frac{n-1}{n}\right)^3.
\]
Applying this inequality $n-n'$ times yields that after $n-n'$ iterations the number of crucial terms left is at most
\[
 T\cdot\left(\frac{n-1}{n} \right)^3\left(\frac{n-2}{n-1} \right)^3
\cdots
\left(\frac{n'}{n'+1} \right)^3
=
T\cdot \left( \frac{n'}{n}\right)^3.
\]
When $n'=\sqrt{\frac{2}{3}n^{\epsilon}}$ and $T=n^{3-\epsilon}$, this is at most
$\frac{2}{3}n'$.
\qed
\end{proof}
\emph{Remark:} A previous version of this paper \cite{DBLP:journals/corr/BoyarF14a}, contained a version of the lemma with a proof substantially more complicated. We thank anonymous reviewers for suggesting this simpler proof.

It should be noted that Lemma~\ref{lem:subbotovstyle} cannot be improved to the case where $\epsilon=0$, no matter what algorithm is used to choose the $0$-restriction. To see this consider the function containing all degree three terms. For this function, \emph{any} $0$-restriction (or $1$-restriction) leaving $n'$ variables will have at least $\binom{n'}{3}$ crucial terms. 
On the other hand,  restricting with $x_1+x_2=0$ results in all crucial terms
with 
both $x_1$ and $x_2$  having lower degree and all crucial terms with just one
of them cancelling out.
This suggests that for handling functions with larger algebraic thickness, one should use restrictions other than just $0$-restrictions.

Combining Lemma~\ref{lem:subbotovstyle} with Corollary~\ref{cor:combiner}, we get the following theorem. 

\begin{theorem}
\label{thm:subbatov}
 Let $\mathcal{T}(f)=n^{3-\epsilon}$ for $0<\epsilon<2$. Then there exists a flat of dimension
at least
$\frac{4}{15}\sqrt{\frac{2}{3}n^{\epsilon}}-3$, such that when restricted to this flat, $f$ is constant. Furthermore this flat can be found efficiently.
\end{theorem}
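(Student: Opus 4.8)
The plan is to chain together the two ingredients that have already been developed: Lemma~\ref{lem:subbotovstyle}, which reduces a function of thickness $n^{3-\epsilon}$ to a bounded-crucial-term function on roughly $\sqrt{n^\epsilon}$ variables, and Corollary~\ref{cor:combiner}, which converts such a function into a constant restriction on a flat whose dimension is a constant fraction of the number of surviving variables. First I would take the function $g$ and affine bijection $A$ with $g = f \circ A$ and $\|g\| = \mathcal{T}(f) = n^{3-\epsilon}$; since $\cruc(g) \le \|g\| = n^{3-\epsilon}$, Lemma~\ref{lem:subbotovstyle} applies and yields a $0$-restriction $g'$ on $n' = \lfloor \sqrt{\tfrac{2}{3} n^{\epsilon}} \rfloor$ variables with $\cruc(g') \le \tfrac{2}{3} n'$. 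Then I would apply Corollary~\ref{cor:combiner} to $g'$ (with its $n'$ in place of the $n$ there), obtaining a flat $E'$ in the $n'$-dimensional space on which $g'$ is constant, of dimension at least $\tfrac{4}{15} n' - 2$.

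The second step is bookkeeping on the dimension: substituting $n' \ge \sqrt{\tfrac{2}{3} n^{\epsilon}} - 1$ into $\tfrac{4}{15} n' - 2$ gives $\tfrac{4}{15}\sqrt{\tfrac{2}{3} n^{\epsilon}} - \tfrac{4}{15} - 2 \ge \tfrac{4}{15}\sqrt{\tfrac{2}{3} n^{\epsilon}} - 3$, which is the claimed bound. The third, slightly more delicate, step is translating the flat back through the restrictions and the affine bijection. The $0$-restriction fixing variables to $0$ is itself an affine-subspace intersection, so a flat on which $g'$ is constant lifts to a flat of the same dimension in $\bbF_2^n$ on which $g$ is constant; applying the bijection $A$ (which maps flats to flats of the same dimension) then gives a flat of the same dimension on which $f = g \circ A^{-1} \circ A$... more precisely, if $g$ is constant on a flat $F$, then $f$ is constant on $A^{-1}(F)$ wait---since $g = f\circ A$, $f$ is constant on $A(F)$. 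Either way the dimension is preserved, which is all that matters.

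For the efficiency claim I would note that both Lemma~\ref{lem:subbotovstyle} and Corollary~\ref{cor:combiner} come with polynomial-time procedures (the greedy algorithm, Dickson's theorem via Theorem~\ref{thm:dickson}, and setting one variable per crucial term), and composing the bijection and restrictions is a linear-algebra operation, so the whole pipeline runs in $\mathrm{poly}(\mathcal{T}(f) + n^2)$ time on the assumed input representation.

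I do not expect any real obstacle here, since the theorem is essentially a composition of two results already proved; the only points needing care are (i) that the hypothesis $0 < \epsilon < 2$ makes the intermediate $n'$ large enough (and at most $n$) for Corollary~\ref{cor:combiner} to be non-vacuous, so that $\cruc(g') \le \tfrac23 n'$ really is the hypothesis that corollary needs, and (ii) correctly tracking that flats push forward through affine bijections and $0$-restrictions without losing dimension. If anything is subtle it is confirming that the floors in $n' = \lfloor\sqrt{\tfrac23 n^\epsilon}\rfloor$ and in Corollary~\ref{cor:combiner}'s $\lfloor\tfrac{4}{15}n\rfloor$-type expressions combine to leave exactly the additive slack $-3$ rather than something slightly worse, which is the routine calculation I would carry out explicitly in the final write-up.
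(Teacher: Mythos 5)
Your proposal is correct and is exactly the paper's argument: the paper proves Theorem~\ref{thm:subbatov} by combining Lemma~\ref{lem:subbotovstyle} with Corollary~\ref{cor:combiner}, with the same dimension bookkeeping and the same (implicit) observation that $0$-restrictions and the affine bijection preserve flat dimension and efficiency. The minor hesitation about $A(F)$ versus $A^{-1}(F)$ is immaterial, as you note, since only the dimension matters.
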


This improves on Theorem~\ref{avishayGilAlgThickNorm} for functions
with algebraic thickness $n^{s}$ for $1\leq s\leq 2.82$, and the smaller $s$, the bigger
the improvement, e.g. for $\mathcal{T}(f)\leq n^{2}$, our bound guarantees
$\mathcal{N}(f)\in \Omega(n^{1/2})$, compared to $\Omega(n^{1/8})$.

\subsection{Normal Functions with low sparsity}
How good are the guarantees given in the previous section?
The purpose of this section is first to show that the result from
Theorem~\ref{thm:subbatov} is at most a factor of $\Theta(\sqrt{n})$
from being tight. More precisely, we show that for any $2<s\leq 3$ there exist
functions with thickness at most $n^s$ that are nonconstant on flats
of dimension $O(n^{2-\frac{s}{2}})$.
Notice that this contains Theorem~\ref{thm:degree3disperser} as a special case where $s=3$.

\begin{theorem}
\label{thm:randomsmallthick}
For any $2<s\leq 3$, for sufficiently large $n$,
there exist functions with degree $3$ and algebraic thickness at most
$n^{s}$ that, for sufficiently large $n$, remain nonconstant on all flats of dimension
$6.12n^{2-\frac{s}{2}}$.
\end{theorem}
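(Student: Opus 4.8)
The plan is to prove Theorem~\ref{thm:randomsmallthick} by the probabilistic method, extending the argument that underlies Theorem~\ref{thm:degree3disperser} (which is recovered here as the case $s=3$). Let $f$ be a \emph{random sparse degree-three polynomial}: include each of the $\binom{n}{3}$ possible degree-three monomials in the ANF of $f$ independently with probability $p := \min\{1/2,\ 3n^{s-3}\}$, and include no other monomials. Then the expected number of terms is $p\binom{n}{3} \le \tfrac{1}{2}n^{s}$, so a Chernoff bound gives that with probability $1-o(1)$ the sparsity satisfies $\|f\| \le n^{s}$, and hence $\mathcal{T}(f) \le \|f\| \le n^{s}$. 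The bulk of the work is to show that, with probability $1-o(1)$, $f$ is nonconstant on every flat of dimension $k := \inceil{6.12\, n^{2-s/2}}$; since a flat of larger dimension contains flats of dimension $k$, this also covers all larger dimensions, and in particular forces $f$ nonconstant.

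The key step is a bound, uniform over all $k$-flats $E$, on $\prob{f \text{ is constant on } E}$. Write $E = \mathbf{a} + V$ with $V = \mathrm{span}(\mathbf{v}_1,\dots,\mathbf{v}_k)$. Since $\mathbf{v}_1,\dots,\mathbf{v}_k$ are linearly independent, there is a coordinate set $I = \{i_1 < \dots < i_k\} \subseteq [n]$ on which $V$ projects onto all of $\bbF_2^{I}$; after a change of basis of $V$ we may assume the new basis $\mathbf{w}_1,\dots,\mathbf{w}_k$ satisfies $w_{\ell, i_m} = \delta_{\ell m}$. Substituting $x_j \mapsto a_j + \sum_{\ell} w_{\ell,j} z_\ell$ into $f$, one computes that for $\ell_1 < \ell_2 < \ell_3$ the coefficient of $z_{\ell_1} z_{\ell_2} z_{\ell_3}$ in $f|_E$ equals $c_{\{i_{\ell_1}, i_{\ell_2}, i_{\ell_3}\}}$ plus an $\bbF_2$-linear combination of ANF coefficients $c_T$ with $T \not\subseteq I$: the contribution of a monomial $\prod_{j\in T} x_j$ to this coefficient is the permanent --- equivalently, over $\bbF_2$, the determinant --- of the $3\times 3$ matrix with entries $w_{\ell_r, t}$ for $r\in\{1,2,3\}$ and $t \in T$, and by the normalization $w_{\ell, i_m} = \delta_{\ell m}$ this is $1$ when $T = \{i_{\ell_1}, i_{\ell_2}, i_{\ell_3}\}$ and $0$ for every other $T \subseteq I$. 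Consequently, conditioning on the values of $(c_T)_{T\not\subseteq I}$, the $\binom{k}{3}$ top-degree coefficients of $f|_E$ become mutually independent: each is a fixed bit plus the independent Bernoulli$(p)$ variable $c_{\{i_{\ell_1},i_{\ell_2},i_{\ell_3}\}}$, and distinct triples index distinct $3$-subsets of $I$. Since $f$ constant on $E$ forces all of these coefficients to vanish, and each vanishes with probability at most $\max\{p,1-p\} = 1-p$, averaging over the conditioning gives
\[
\prob{f \text{ is constant on } E} \le (1-p)^{\binom{k}{3}} \le \exp\!\Bigl(-p\binom{k}{3}\Bigr).
\]

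It remains to take a union bound. The number of $k$-dimensional flats in $\bbF_2^n$ is at most $\binom{n}{k}_{2}\cdot 2^{n-k} \le 2^{n(k+1)}$, so
\[
\prob{f \text{ is constant on some } k\text{-flat}} \le 2^{n(k+1)}\exp\!\Bigl(-p\binom{k}{3}\Bigr).
\]
With $p = \Theta(n^{s-3})$ and $k = \Theta(n^{2-s/2})$ one has $p\binom{k}{3} = \Theta(n^{s-3}k^{3}) = \Theta(n^{3-s/2}) = \Theta(nk)$, and choosing the constant $6.12$ in the definition of $k$ makes the exponent $n(k+1)\ln 2 - p\binom{k}{3}$ tend to $-\infty$ for large $n$; hence this probability is $o(1)$. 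Intersecting with the event $\|f\| \le n^{s}$ from the first paragraph, for every sufficiently large $n$ there is a fixed degree-three $f$ with $\mathcal{T}(f)\le n^{s}$ that is nonconstant on every flat of dimension $6.12\,n^{2-s/2}$, as claimed; taking $s=3$, $p=1/2$ recovers Theorem~\ref{thm:degree3disperser}.

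The main obstacle is the structural claim in the second paragraph. For the union bound to close, one needs $\Omega(k^{3})$ \emph{genuinely independent} coefficients of $f|_E$, and extracting them for an arbitrary flat --- as opposed to a coordinate subcube, where the claim is immediate --- is the crux. The device is to select a coordinate set $I$ on which the flat has full rank and re-coordinatize the flat so that, for each $3$-subset of $I$, the corresponding coefficient $c_T$ feeds into exactly one of the $\binom{k}{3}$ top-degree coefficients of $f|_E$, and with coefficient $1$; this is precisely the $3\times 3$ permanent/determinant computation, and verifying that nothing degenerates over $\bbF_2$ is the one point that genuinely requires care. Everything else --- optimizing $p$, the Chernoff estimate, the count of $k$-flats, and tracking constants to land at $6.12$ --- is routine bookkeeping.
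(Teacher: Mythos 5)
Your proof is correct and follows essentially the same route as the paper's: a random degree-$3$ polynomial with each term included with probability $\Theta(n^{s-3})$, a bound of $(1-p)^{\binom{k}{3}}$ on the probability of being constant on a fixed $k$-flat via the observation that the top-degree coefficients of the restriction each contain a distinct ``private'' ANF indicator (the paper gets this by the same change-of-coordinates idea, phrased as a WLOG substitution for $x_n,\dots,x_{k+1}$), and a union bound over at most $2^{n(k+1)}$ flats with $k=\Theta(n^{2-s/2})$. The only differences are cosmetic: you take $p=\min\{1/2,\,3n^{s-3}\}$ instead of $\tfrac{1}{2}n^{s-3}$ and spell out the full-rank coordinate set and the $3\times 3$ permanent/determinant verification explicitly, which if anything leaves extra slack in the constant $6.12$.
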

\begin{proof}
The proof uses the probabilistic method. We endow the set of all
Boolean functions of degree $3$ with a probability distribution
$\mathcal{D}$, 
and show that under this distribution a function has the
promised normality with high probability.

The proof is divided into the following steps:
First we describe the probability distribution
$\mathcal{D}$. Then, we fix an arbitrary $k$-dimensional flat
$E$, and bound the probability that a random $f$ chosen according to
$\mathcal{D}$ is constant on $E$.
We show that for $k=Cn^{2-s/2}$, where the constant $C$ is determined later,
this probability is sufficiently small that a
union bound over all possible choices of $E$ gives the desired result.

We define $\mathcal{D}$ by describing the probability distribution on the ANF. We let each possible degree $3$ term
be included with probability $\frac{1}{2n^{3-s}}$. The expected number
of terms is thus $\frac{1}{2}n^{s-3}\binom{n}{3}\leq n^{s}/12$, and the probability
of having more than $n^{s}$ terms is less than $0.001$ for large $n$. 
Now let $E$ be an arbitrary but fixed $k$-dimensional flat.

One  way to think of a function restricted to a $k$-dimensional
flat is that it can be obtained by a sequence of $n-k$ affine variable substitutions
of the form $x_i := \sum_{j\in S}x_j+c$. This changes the ANF of
the function since $x_i$ is no longer a ``free'' variable.
Assume without loss of generality that we substitute for the variables
$x_n,\ldots ,x_{k+1}$ in that order. Initially we start with
the function $f$ given by
\[
f(x) = \sum_{\{ a,b,c\}\subseteq [n]} I_{abc}x_ax_bx_c,
\]
where $I_{abc}$ is the indicator random variable, indicating
whether the $x_ax_bx_c$ is contained in the ANF.
Suppose we perform the $n-k$ restrictions and obtain the function $\tilde{f}$.
The ANF of $\tilde{f}$ is given by
\[
 f(x)
=
\sum_{\{a , b , c \} \subseteq [k]}
\left(I_{abc}+ \sum_{s\in S_{abc}}I_{s}\right)x_ax_bx_c,
\]
where $S_{abc}$ is some set of indicator random variables depending on the
restrictions performed.
It is important that $I_{abc}$, the indicator random variable corresponding to
$x_ax_bx_c$, for $\{a,b,c \}\subseteq [k]$ is \emph{only} occurring at $x_ax_bx_c$. Hence
we conclude that independently of
the outcome of all the indicator random variables $I_{a'b'c'}$ with
$
\{a',b',c' \}\not\subseteq [k],
$
we have that the marginal probability for any $I_{abc}$ with $\{a,b,c \}\subseteq [k]$
occurring remains at least $\frac{1}{2n^{3-s}}$.

Define $t=\binom{k}{3}$ random variables,
 $Z_1,\ldots ,Z_t$,
 one for each potential term in the ANF of $\tilde{f}$, such that $Z_j=1$ if and
 only if the corresponding term is present in the ANF, 
 and $0$ otherwise.
The obtained function is only constant if there are no degree 3 terms,
so the probability of $\tilde{f}$ being constant is thus at most
\begin{align*}
\prob{
Z_1=\ldots= Z_t =0
}
\leq &
\left(1-\frac{1}{2n^{3-s}} \right)^{\binom{k}{3}}\\
\leq&
\left(1-\frac{1}{2n^{3-s}} \right)^{\frac{C^3}{27} ( n^{6-3s/2}  )  }\\
=&
\left(\left(1-\frac{1}{2n^{3-s}} \right)^{2n^{3-s}}\right)^{\frac{C^3}{54}(n^{3-s/2} )}\\
\leq &
\exp\left(-\frac{C^3}{54}n^{3-s/2}\right).
\end{align*}
The number of choices for $E$ is at most $2^{n(k+1)}$, so the probability
that $f$ becomes constant on \emph{some} affine flat of dimension $k$
is at most
\begin{align*}
\exp\left(-\frac{C^3}{54}n^{3-s/2} + C\ln(2)    n^{3-s/2}   +n)\right).
\end{align*}
Now if $C>\sqrt{54\ln(2)}\approx 6.11..$, this quantity tends to $0$. We 
conclude that with high probability the function obtained has algebraic thickness
at most $n^s$ and normality at most $6.12n^{2-\frac{s}{2}}$.
\qed
\end{proof}

There is factor of $\Theta(\sqrt{n})$ between the existence guaranteed by Theorem~\ref{thm:subbatov} and Theorem~\ref{thm:randomsmallthick} and we leave it as an interesting problem to close this gap.

The algorithm studied in this paper works by setting variables to $0$
until all remaining terms have degree at most $2$, and after that appealing to 
Theorem~\ref{thm:dickson}.
A proof similar to the previous 
shows that among such algorithms, the bound from
Theorem~\ref{thm:subbatov} is very close to being asymptotically tight.

\begin{theorem}
\label{thm:randpolyandzerorestrictions}
For any $2<s<3$, there exist functions with degree $3$ and algebraic thickness at most
$n^{s}$ that have degree $3$ on any $0$-restriction of dimension $3\sqrt{\ln n}n^{\frac{3-s}{2}}$.  
\end{theorem}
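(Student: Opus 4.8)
The plan is to run the probabilistic method exactly as in the proof of Theorem~\ref{thm:randomsmallthick}, exploiting that, once we only have to defeat $0$-restrictions instead of arbitrary flats, two things become much easier. First, a $0$-restriction never causes cancellation in a polynomial all of whose terms have degree $3$: a monomial $x_{a}x_{b}x_{c}$ survives unchanged if $a,b,c$ are all kept and is killed otherwise, and distinct degree-$3$ monomials never collide. Hence the $0$-restriction of such an $f$ that keeps a set $T$ of free variables has degree $3$ if and only if \emph{some} term of the ANF of $f$ has all three of its variables inside $T$. Second, the number of $0$-restrictions leaving $k$ free variables is only $\binom{n}{k}\le n^{k}=\exp(k\ln n)$, as opposed to the $2^{\Theta(nk)}$ affine flats of dimension $k$ handled in Theorem~\ref{thm:randomsmallthick}; this gap is exactly what lets the threshold drop from $\Theta(n^{2-s/2})$ there to $\Theta(\sqrt{\ln n}\,n^{\frac{3-s}{2}})$ here.

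Concretely, I would let $\mathcal{D}$ be the distribution on degree-$3$ polynomials in which each of the $\binom{n}{3}$ possible degree-$3$ monomials is included independently with probability $p=2n^{s-3}$. The expected number of terms is $p\binom{n}{3}\le n^{s}/3$, so, the inclusions being independent, a Chernoff bound gives that with probability $1-\exp(-\Omega(n^{s}))$ the ANF has at most $n^{s}$ terms; then the sparsity, hence the algebraic thickness, is at most $n^{s}$, and (with the same overwhelming probability) at least one term is present, so the degree is exactly $3$.

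Next I would set $k=\inceil{3\sqrt{\ln n}\,n^{\frac{3-s}{2}}}$, fix a set $T\subseteq[n]$ with $|T|=k$, and bound the chance that the $0$-restriction keeping exactly the variables of $T$ drops below degree $3$. By the first observation this is the event that no term of the ANF is supported inside $T$, which has probability $(1-p)^{\binom{k}{3}}\le\exp(-p\binom{k}{3})$. A union bound over the at most $\exp(k\ln n)$ choices of $T$ thus bounds the probability that \emph{some} $0$-restriction of dimension $k$ has degree below $3$ by
\[
\exp\!\left(k\ln n-p\binom{k}{3}\right).
\]
Since $k\to\infty$ we have $\binom{k}{3}=\frac{k^{3}}{6}(1-o(1))$; substituting $p=2n^{s-3}$ and $k^{2}\ge 9\,n^{3-s}\ln n$ gives $p\binom{k}{3}\ge 3k\ln n\,(1-o(1))$, so the displayed quantity is at most $\exp(-2k\ln n\,(1-o(1)))\to 0$. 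Finally, since a $0$-restriction that preserves degree $3$ for $T$ also preserves it for every superset of $T$ (again because there is no cancellation), a function having degree $3$ on every $0$-restriction that keeps exactly $k$ variables has degree $3$ on every $0$-restriction that keeps at least $k$ variables. Combining this with the thickness bound, a function drawn from $\mathcal{D}$ satisfies all the requirements with probability tending to $1$, which establishes the existence statement.

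I do not expect a genuine obstacle here: the argument is the union-bound computation of Theorem~\ref{thm:randomsmallthick} with a far cheaper count of the bad objects. The only points that need care are bookkeeping the constants so that the exponent closes with the stated constant $3$ (any $p$ strictly between about $\frac{2}{3}n^{s-3}$ and $3n^{s-3}$ works, and it can be traded against the leading constant in $k$), and the monotonicity remark that lets the union bound range only over sets of size exactly $k$.
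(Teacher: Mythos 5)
Your proposal is correct and follows essentially the same route as the paper: a random degree-$3$ polynomial with each monomial included independently with probability $\Theta(n^{s-3})$, a bound on the probability that a fixed $0$-restriction retains no degree-$3$ term, and a union bound over the $\binom{n}{k}$ choices of kept variables. Your bookkeeping (using $\binom{k}{3}\sim k^{3}/6$ and the crude bound $\binom{n}{k}\le n^{k}$) is in fact a bit cleaner than the paper's, but the underlying argument is the same.
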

\begin{proof}
  We use the same proof strategy as in the proof of
Theorem~\ref{thm:randomsmallthick}. Endow the set of all Boolean functions of degree 3
with the same probability distribution $\mathcal{D}$. For large $n$, the number of
terms is larger than $n^s$ with probability at most $0.001$.
Now we set all but
$C\sqrt{\ln n}n^{\frac{3-s}{2}}$ of
the variables to $0$, and consider the probability of the
function being constant under this fixed $0$-restriction. We will show that this probability
is so small that a union bound over all such choices gives that with high probability
the function is nonconstant under \emph{any} such restriction.
We will see that setting $C=3$ will suffice.
There are $\binom{C\sqrt{\ln n}n^{\frac{3-s}{2}}}{3}$ possible degree 3 terms on these
remaining variables, and we let each one be included with probability $\frac{1}{n^{3-s}}$.
The probability that none of these degree three terms are included is

\begin{align*}
\left(
1-\frac{1}{n^{3-s}}
\right)
^{
  \binom{C\sqrt{\ln n}n^{\frac{3-s}{2}}}{3}
}
&\leq
\left(
  1-\frac{1}{n^{3-s}}
\right)^{\frac{C^3}{27} \left(\sqrt{ \ln n}\right)^3 n^{\frac{9-3s}{2}}  }\\
&=
\left(
\left(
  1-\frac{1}{n^{3-s}}
\right)^{
n^{3-s}}
\right)^{
n^{\frac{3-s}{2}}\left(\ln n \right)^{3/2}\frac{C^3}{27}}
\\
 &\leq
 \exp \left(- \frac{C^3}{27} n^{\frac{3-s}{2}}(\ln n)^{3/2} \right),
\end{align*}
and the number of $0$-restrictions with all but $C\sqrt{\ln n}n^{\frac{3-s}{2}}$
variables fixed is
\begin{align*}
  \binom{n}{C\sqrt{\ln n}n^{\frac{3-s}{2}}}
  &
  \leq
	\frac{
		n^{C\sqrt{\ln n}n^{\frac{3-s}{2}}}
		}{
		(C\sqrt{\ln n}n^{\frac{3-s}{2}})!
		}
\\
&=
\exp (\ln n C\sqrt{\ln n}n^{\frac{3-s}{2}}
-
 \ln ((C\sqrt{\ln n}n^{\frac{3-s}{2}}) !))\\
&\leq
\exp \left(\ln^{3/2}( n )C n^{\frac{3-s}{2}}
-
.99C\sqrt{\ln n} n^{\frac{3-s}{2}}
\ln \left(C\sqrt{\ln n} n^{\frac{3-s}{2}}\right)
\right)
\\
&\leq
\exp \left(   (\ln n)^{3/2}   Cn^{\frac{3-s}{2}}
-
\frac{3-s}{2}.98C(\ln n)^{3/2}n^{\frac{3-s}{2}}
\right)\\
&=
\exp \left((\ln n)^{3/2}Cn^{\frac{3-s}{2}} \left(1 - 0.98\frac{3-s}{2}\right) 
\right),
\end{align*}
where the last two inequalities hold for sufficiently large $n$. Again, by the union bound, the probability that there exists such a choice on which there are no terms of degree three left is at most
\[
\exp \left(- \frac{C^3}{27} n^{\frac{3-s}{2}}(\ln n)^{3/2} \right)
 \exp \left((\ln n)^{3/2}Cn^{\frac{3-s}{2}} \left(1 - 0.98\frac{3-s}{2}\right) 
\right).
\]

For $C\geq 3$ this probability tends to zero, hence we have that with high probability the function does not have a $0$-restriction on $3\sqrt{\ln n}n^{\frac{3-s}{2}}$ variables of degree smaller than $3$.\qed
\end{proof}

%
%

\bibliographystyle{splncs03}

\bibliography{refs}

\end{document}